\documentclass[a4paper]{article}
\usepackage{amsmath,amssymb,amsthm}
\usepackage[T1]{fontenc}
\usepackage{enumerate}
\usepackage{graphicx}
\usepackage{dsfont}
\usepackage{float}
\usepackage[utf8]{inputenc}
\usepackage{color}
\definecolor{light-gray}{gray}{0.95}
\usepackage{listingsutf8}
\usepackage{url}
\usepackage{authblk}

\newcommand{\ew}[1]{\mathbb{E}\left[#1\right]}
\newcommand{\p}[1]{\mathbb{P}\left[#1\right]}

\newcommand{\rr}{\mathbb{R}}

\theoremstyle{plain}

\newtheorem{Proposition}{Proposition}[section]

\theoremstyle{definition}
\newtheorem{Definition}{Definition}[section]

\theoremstyle{remark}
\newtheorem{Remark}{Remark}[section]

\lstset{
numbers=left,                  
stepnumber=1,
backgroundcolor=\color{light-gray},
frame=single,
literate= {Ã–}{{\"O}}1 {Ã„}{{\"A}}1 {Ãœ}{{\"U}}1 {ÃŸ}{{\ss}}2 {Ã¼}{{\"u}}1 {Ã¤}{{\"a}}1 {Ã¶}{{\"o}}1
}


\author[1]{Rainer Haidinger~\thanks{rainer.haidinger@gmail.com}} 
\author[1]{Richard Warnung\thanks{richard.f.warnung@gmail.com}}
 \affil[1]{{\small Risikomanagement, Raiffeisen Kapitalanlage-Gesellschaft m.b.H.
Schwarzenbergplatz 3, 1010 Vienna, Austria  \footnote{The contents of this paper are the authors' sole responsibility. They do not necessarily represent the
views of Raiffeisen Kapitalanlage-Gesellschaft m.b.H. Both authors thank the members of the risk
management department of Raiffeisen Kapitalanlage-Gesellschaft m.b.H. for fruitful discussions.}
\newline
Financial and Actuarial Mathematics, Vienna University of Technology,
Wiedner Hauptstrasse 8-10/105-1, 1040 Vienna, Austria}}

\title{Risk Measures in a Regime Switching Model Capturing Stylized Facts}
\date{\today}
\begin{document}
\maketitle

\begin{abstract}
We pick up the regime switching model for asset returns introduced by Rogers and Zhang. The calibration involves various markets including implied volatility in order to gain additional predictive power. We focus on the calculation of risk measures by Fourier methods that have successfully been applied to option pricing and analyze the accuracy of the results. 
\end{abstract}

\section{Intro}

This article is based on the innovative approach for modelling asset returns introduced by Chris Rogers and Liang Zhang in their paper "An asset return model capturing stylized facts"~\cite{rogers}. The model captures several stylized facts of asset returns which is achieved by modelling returns with a hidden two-state Markov chain representing states of the economy. Conditional on these states returns are modelled by generalized hyperbolic distributions. We shortly describe the model and its calibration in Section~\ref{sec:model}.

We apply the model to equity, bond and commodity markets. Furthermore we include the VDAX index, an index that represents implied volatility for the future $45$ days, in the calibration procedure and hope to increase the predictive power of the model by this forward looking ingredient. The final aim is to calculate value-at-risk and expected shortfall in this framework by Fourier techniques from option pricing theory~\cite{fft,cont2004financial,Lewis:simple} in Section~\ref{sec:riskmeasures}.

In the following we recall some stylized facts exhibited by financial time series across a wide range of instruments, markets and time periods. 

Usually the distribution of daily log-returns is leptokurtic which means that it shows fat tails and a high peak at its median. Since the normal distribution is mesokurtic as its excess-kurtosis is zero, it is not fully adequate for modelling such returns. In Figure~\ref{ghyp_normal} the fat tails in the return histogram and the bad fit of the normal distribution to the data is clearly visible. The returns were calculated for daily closing prices of the DAX Index in the period from 18 November 2005 to 30 December 2009. The second plot shows a QQ-Plot of the data versus the theoretical quantiles of the normal distribution and again the deviation is visible. The Jarque-Bera statistic confirms the departure  from normality at a one percent level of significance\footnote{p-value < 2e-16.}  .

\begin{figure}[H]
\begin{center}
\includegraphics[width=\textwidth]{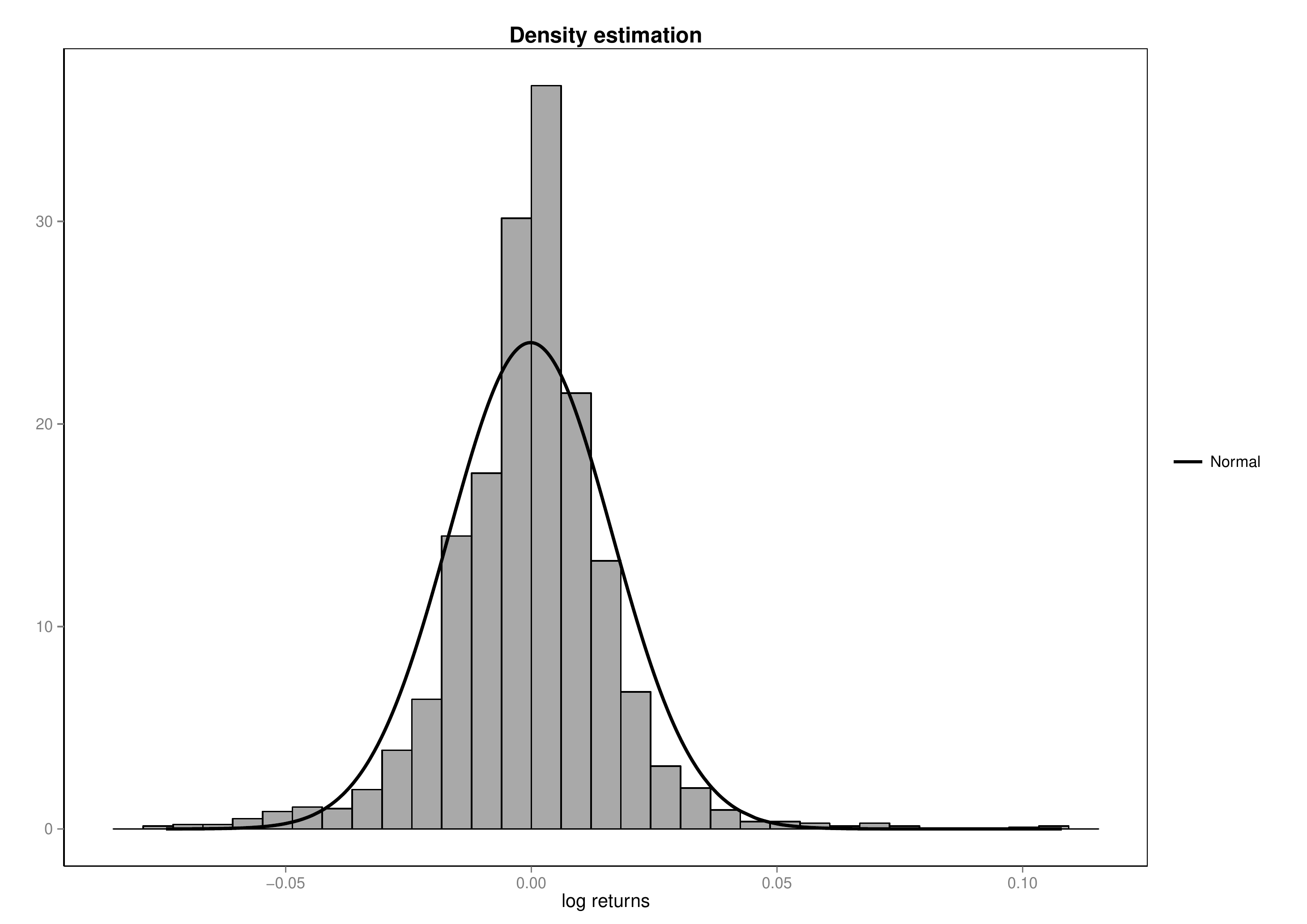}
\caption{Histogram of daily DAX log returns}
  \label{ghyp_normal}
\end{center}
\end{figure}

\begin{figure}[H]
\begin{center}
\includegraphics[width=\textwidth]{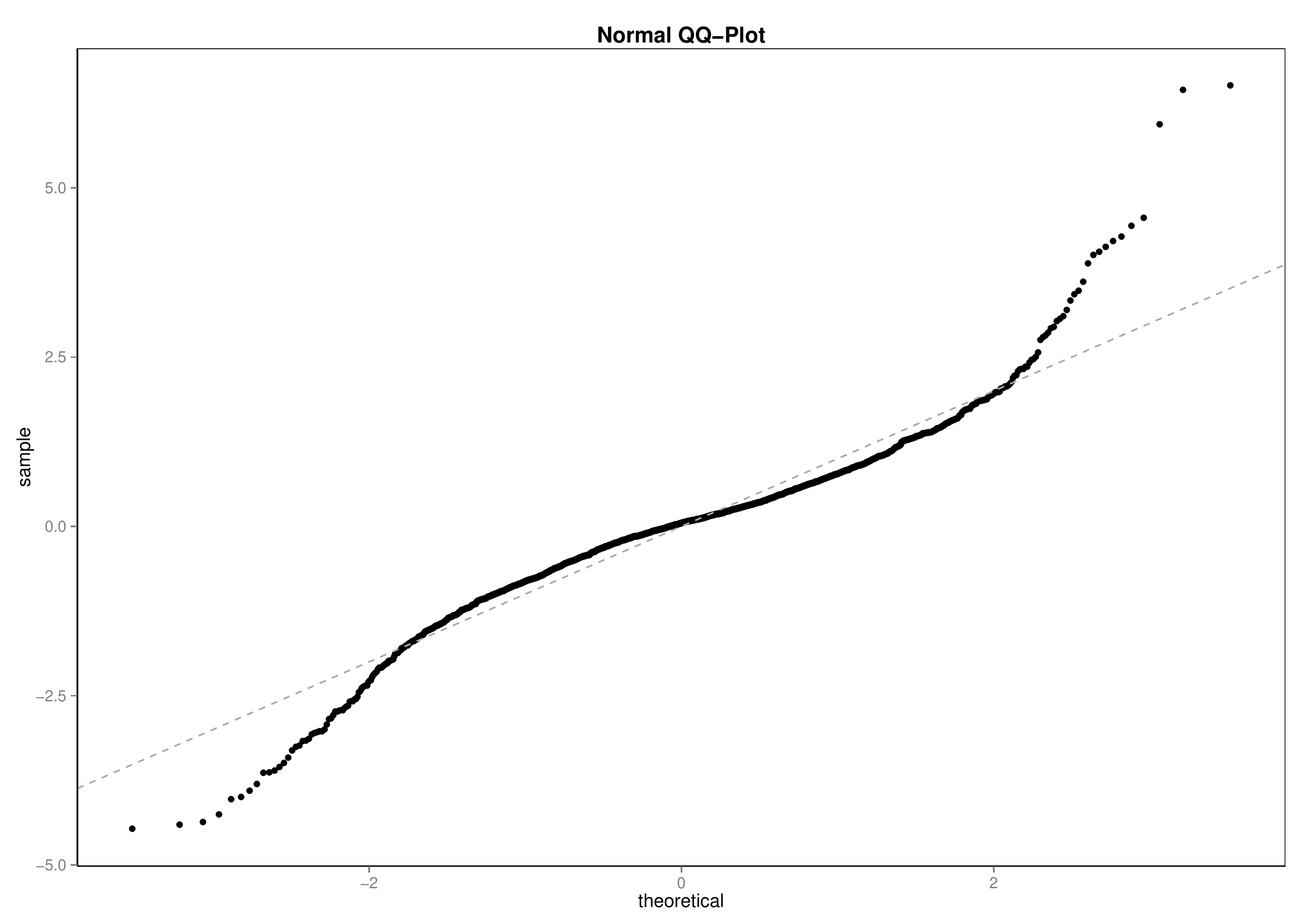}
\caption{QQ-Plot of daily DAX log returns}
  \label{qq_plot}
\end{center}
\end{figure}

Furthermore the autocorrelation of asset returns is usually observed to be insignificant while the autocorrelation of absolute returns is significantly different from zero up to surprisingly high lags. This is known as volatility clustering. Large movements tend to be followed by large movements, irrelevant of the sign. In Figure~\ref{acf} we see a plot of the autocorrelation function (ACF) of DAX returns with no significant autocorrelation whereas the plot of the ACF of absolute returns shows high autocorrelations and a slow decay even at high lags.

\begin{figure}[H]
\begin{center}
\includegraphics[width=\textwidth]{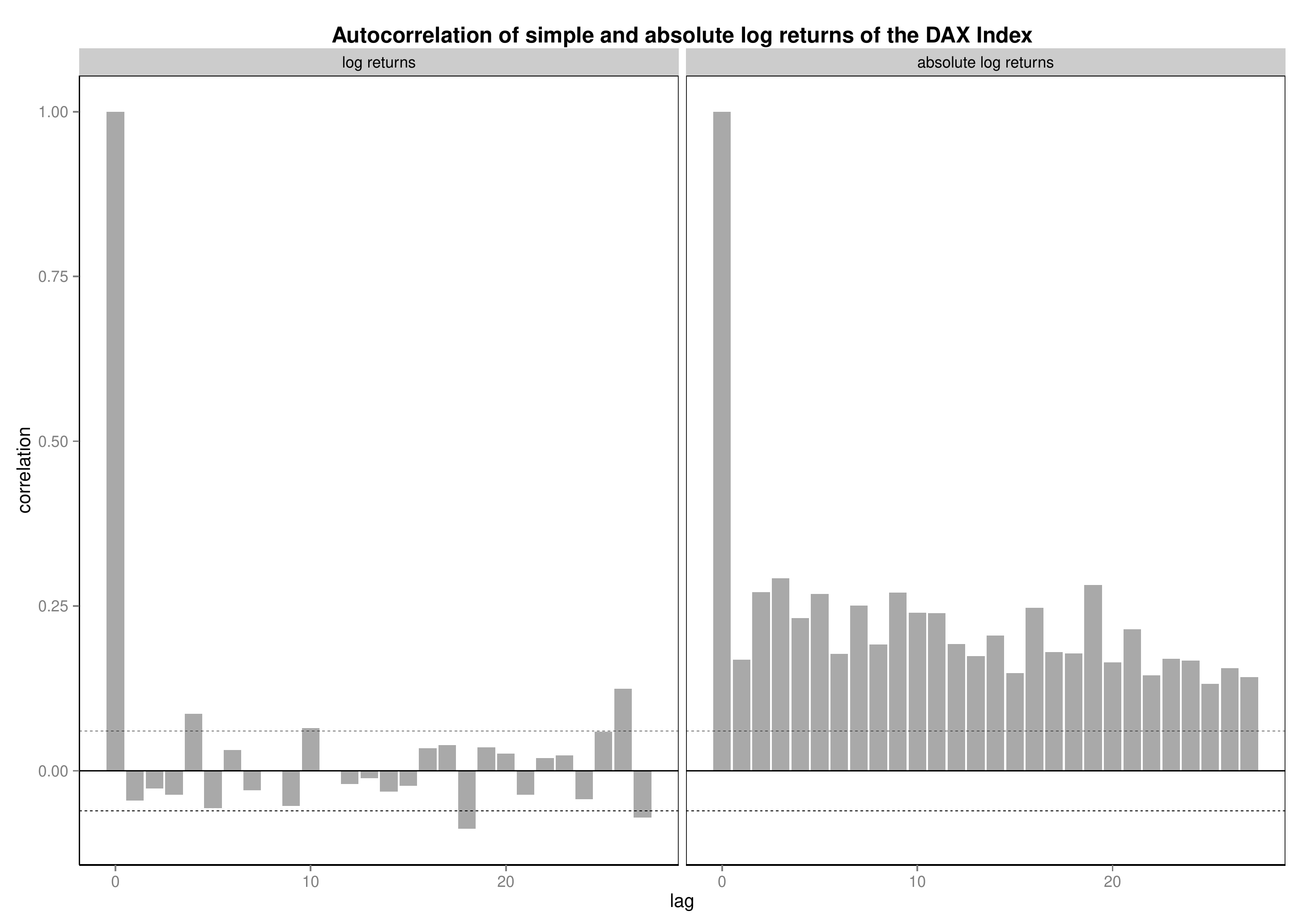}
\caption{Autocorrelation structure daily DAX log returns}
  \label{acf}
\end{center}
\end{figure}

Finally, there is an asymmetry between large losses and large gains, large, sudden draw-downs can be observed. Gains are usually made slowly and in a less sharp pattern. As a consequence the distribution is skewed. This fact applies to stocks but not necessarily to indices. In Figure~\ref{ghyp_normal} we see that the distribution of the returns of the DAX index is fairly symmetric.

Rogers and Zhang~\cite{rogers} develop a model that captures all these facts. The main idea is to use a hidden Markov model with a non-Gaussian distribution of returns given the states. In the following we shortly describe their model and its calibration.

\section{The Hidden Markov Model of Rogers and Zhang}
\label{sec:model}
\subsection{General Properties}

Hidden Markov models, in finance sometimes applied as regime switching models, are  mixture models for financial time series where the distribution of the series is determined by an unobservable Markov chain. Whereas the chain is unobservable, the returns are observed and conditional statements can be done.

In this model we consider an ergodic two-state first order Markov chain $(\xi_n)_{n \in \mathbb{N}}$ . Let  $P = \{p_{ij}\} \in \rr^{ 2}$ denote the transition matrix. The $k$-step transition probabilities are denoted by $p^k_{ij}=\p{\xi_n=j|\xi_{n-k}=i}$ and $P^{(k)}=\{p^k_{ij}\}:=P^k$ denotes the corresponding $k$-step transition matrix.

For modelling the returns, two sequences $\left(X_n^i \right)_{n \in \mathbb{N}}$ for $\ i \in \{1,2\}$ independent of $(\xi_n)_{n \in \mathbb{N}}$ are defined. In each state  $\ i \in \{1,2\}$ we have $X_n^i \sim F_i $ for $n \in \mathbb{N}$. The concrete choice of the distributions $F_i,\ i \in \{1,2\}$, will be revealed further below.

Based on these definitions the returns are modelled by
\begin{align}
r_n=\sum_{i=1}^{2} \mathbf{1}_{\left\{\xi_n=i\right\}}X_n^i, \quad\text{for } n \in \mathbb{N}.
\end{align}
As  $(\xi_n)_{n \in \mathbb{N}}$ is ergodic we can determine the stationary distribution $\pi=\begin{pmatrix}\pi_1,\pi_2\end{pmatrix}'$.
The stationary distribution is given by the normalized left eigenvector of $P$, so the calculation of $\pi$ for the two-state model is easily done and we get
\begin{align*}
\pi=\begin{pmatrix}\frac{1-p_{22}}{2-p_{11}-p_{22}} \\\frac{1-p_{11}}{2-p_{11}-p_{22}} \end{pmatrix}.
\end{align*}

The model incorporates the stylized facts discussed above. To show this we cite the formula for the covariance structure of $\left(|r_n|\right)_{n\in \mathbb{N}}$ (see~\cite{ryden}).
\begin{Proposition} Let $g$ be a function of the return, such that $\ew{g(r_n)|\xi_n=S_i}$, the expected value given the state, exists for $\ i \in \{1,2\}$. Then
the autocovariance of $g\left(r_n\right)_{n\in \mathbb{N}}$  is given by
\begin{align}\label{eq.acf}
\mathbb{COV}\left[g(r_n),g(r_{n+k})\right]=\boldsymbol{\pi}\mathbf{G}\mathbf{P}^{(k)}\mathbf{G}\mathbf{1}-\left(\boldsymbol{\pi}\mathbf{G}\mathbf{1} \right)^2, \ k=1,2,\ldots 
\end{align}
where $\mathbf{G}:=\textrm{diag}(G_1,G_2)$ and $G_i:=\ew{g(r_n)|\xi_n=S_i},\ i \in \{1,2\} $.
\end{Proposition}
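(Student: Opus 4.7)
\emph{Proof plan.} I would compute the two moments $\ew{g(r_n)g(r_{n+k})}$ and $\ew{g(r_n)}$ separately and then subtract. The crucial preliminary observation is that, since exactly one of the indicators $\mathbf{1}_{\{\xi_n=i\}}$ is nonzero at each time, the function pulls inside the mixture:
\[
g(r_n)=\sum_{i=1}^{2}\mathbf{1}_{\{\xi_n=i\}}\,g(X_n^i).
\]
This representation separates the chain from the conditionally independent innovations and makes every moment accessible via conditioning on the hidden states.

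For the marginal mean, conditioning on $\xi_n$ and using that the family $(X_n^i)_n$ is independent of the chain gives $\ew{g(r_n)\mid\xi_n=i}=G_i$. Weighting by the stationary probabilities yields $\ew{g(r_n)}=\sum_i\pi_iG_i=\boldsymbol{\pi}\mathbf{G}\mathbf{1}$, which is constant in $n$ by stationarity, so the subtracted term in the claim is exactly $(\boldsymbol{\pi}\mathbf{G}\mathbf{1})^2$.

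For the cross moment, I would expand the product of the two indicator representations at times $n$ and $n+k$ and condition on the pair $(\xi_n,\xi_{n+k})$. The identity
\[
\ew{g(X_n^i)\,g(X_{n+k}^j)\,\big|\,\xi_n=i,\,\xi_{n+k}=j}=G_iG_j
\]
rests on three independence facts: the families $(X_n^1)_n$ and $(X_n^2)_n$ are independent of each other, both are independent of the chain, and within each family entries at different time indices are independent. Combined with the stationary two-time law $\p{\xi_n=i,\,\xi_{n+k}=j}=\pi_i\,p^k_{ij}$ this produces
\[
\ew{g(r_n)\,g(r_{n+k})}=\sum_{i,j=1}^{2}\pi_i\,p^k_{ij}\,G_i\,G_j,
\]
which, because $\mathbf{G}=\mathrm{diag}(G_1,G_2)$, is precisely the matrix product $\boldsymbol{\pi}\mathbf{G}\mathbf{P}^{(k)}\mathbf{G}\mathbf{1}$. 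Subtracting the squared mean delivers the formula.

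The main obstacle is not computational but notational: one must be precise about the independence structure to justify the factorization of the conditional expectation on \emph{every} summand — including cross terms that mix innovations drawn from the two different families — and must then verify that the resulting double sum collapses to the compact matrix expression. Once the mixture has been expanded via the indicators, no technique beyond conditioning on states and the stationarity of $\pi$ is needed.
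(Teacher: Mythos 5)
Your argument is correct and complete: expanding $g(r_n)=\sum_i\mathbf{1}_{\{\xi_n=i\}}g(X_n^i)$, computing the mean and the two-time cross moment by conditioning on the hidden states, and using the independence of the innovation sequences from the chain (and across time) to factor $\ew{g(X_n^i)g(X_{n+k}^j)}=G_iG_j$ for $k\ge 1$ is exactly the standard route, and the double sum $\sum_{i,j}\pi_i G_i p^k_{ij}G_j$ does collapse to $\boldsymbol{\pi}\mathbf{G}\mathbf{P}^{(k)}\mathbf{G}\mathbf{1}$. Note, however, that the paper itself gives no proof of this proposition --- it is quoted from the reference on hidden Markov models --- so there is nothing in the text to compare your derivation against; your write-up supplies the verification the paper omits, and the only hypotheses you use beyond the stated one (existence of the $G_i$) are the stationarity of the chain and the i.i.d.-across-time structure of each family $(X_n^i)_n$, which are implicit in the model setup.
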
 

One stylized fact mentioned is that  log returns do not exhibit significant autocorrelation.
The following  proposition of Rogers and Zhang~\cite{rogers} assures that the model captures this property.
\begin{Proposition}
Let $\mu_i:=\ew{r_n|\xi_n=i}$  for $i \in \{1,2\}$. If $\mu_i = \mu$ for $i \in \{1,2\}$ then

\begin{align}
\ew{r_nr_{n+k}}=\mu^2
\end{align}
for any $k>0$ and $n \in \mathbb{Z}$.
\end{Proposition}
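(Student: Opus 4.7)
The cleanest route is to invoke the autocovariance formula from the preceding proposition with the identity function $g(x)=x$, then convert the covariance statement into one about the raw second cross-moment. Concretely, I would first verify that the conditional expectations $G_i:=\mathbb{E}[r_n\mid \xi_n=i]=\mu_i=\mu$ make $\mathbf{G}=\mu\, I$, so that the matrix product in \eqref{eq.acf} simplifies dramatically.

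Step one: plug $\mathbf{G}=\mu I$ into the autocovariance formula to obtain
\[
\mathbb{COV}[r_n,r_{n+k}] \;=\; \mu^{2}\,\boldsymbol{\pi}\mathbf{P}^{(k)}\mathbf{1} \;-\; \mu^{2}\,(\boldsymbol{\pi}\mathbf{1})^{2}.
\]
Step two: use that $\mathbf{P}^{(k)}$ is stochastic (row sums equal one), giving $\mathbf{P}^{(k)}\mathbf{1}=\mathbf{1}$, and that $\boldsymbol{\pi}$ is a probability vector, giving $\boldsymbol{\pi}\mathbf{1}=1$. Hence both terms equal $\mu^{2}$ and the covariance vanishes. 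Step three: compute $\mathbb{E}[r_n]$ under the stationary regime as $\pi_1\mu_1+\pi_2\mu_2=\mu(\pi_1+\pi_2)=\mu$, independent of $n$, and conclude
\[
\mathbb{E}[r_n r_{n+k}] \;=\; \mathbb{COV}[r_n,r_{n+k}] + \mathbb{E}[r_n]\mathbb{E}[r_{n+k}] \;=\; 0+\mu\cdot\mu \;=\; \mu^{2}.
\]

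Alternatively, a direct first-principles proof proceeds by conditioning on the pair of hidden states: writing $\mathbb{E}[r_n r_{n+k}] = \sum_{i,j} \mathbb{E}[X_n^i X_{n+k}^j\mid \xi_n=i,\xi_{n+k}=j]\,\mathbb{P}[\xi_n=i,\xi_{n+k}=j]$ and exploiting that the sequences $(X_n^i)$ are independent of the chain and across distinct time indices, so the conditional expectation factorises into $\mu_i\mu_j=\mu^2$; the probabilities then sum to one. I would likely include this as a remark since it also makes clear \emph{why} the hypothesis $\mu_1=\mu_2$ is what removes the dependence on $k$.

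The only thing to watch is that the statement is asserted for every $n\in\mathbb{Z}$, so one should either assume the chain is in its stationary distribution (which is natural since $\boldsymbol{\pi}$ was already introduced) or note that both Proposition on autocovariance and the computation of $\mathbb{E}[r_n]$ need this; this is not really an obstacle, just a bookkeeping remark. Overall the argument is essentially a one-line corollary of the previous proposition plus the stochastic matrix identity $\mathbf{P}^{(k)}\mathbf{1}=\mathbf{1}$.
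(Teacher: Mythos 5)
Your argument is correct; note that the paper itself does not reproduce a proof of this proposition but simply cites Rogers and Zhang, so there is no in-text proof to diverge from. Your first route (setting $g(x)=x$ in the autocovariance formula~\eqref{eq.acf}, so that $\mathbf{G}=\mu I$, and then using $\mathbf{P}^{(k)}\mathbf{1}=\mathbf{1}$ and $\boldsymbol{\pi}\mathbf{1}=1$ to kill the covariance) is a clean corollary of the preceding proposition, but it does lean on stationarity both for~\eqref{eq.acf} and for $\ew{r_n}=\boldsymbol{\pi}'(\mu_1,\mu_2)^T=\mu$, as you rightly flag. Your alternative first-principles computation, conditioning on $(\xi_n,\xi_{n+k})$ and using the independence of the sequences $\left(X_n^i\right)$ from the chain and across time so that the cross-moment factorises into $\mu_i\mu_j=\mu^2$ before the state probabilities sum to one, is the argument that matches the paper's own remark that the result ``does not require that the Markov chain has only two states but it is true for any finite number of states as long as the expected returns are equal'': that proof never uses the two-state structure, the explicit form of $\boldsymbol{\pi}$, or even stationarity of the chain. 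Either version suffices; the second is the more robust and is presumably the one intended by the cited source.
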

It is worth noting that the proof of this statement does not require that the Markov chain has only two states but it is true for any finite number of states as long as the expected returns are equal. Rogers and Zhang~\cite{rogers} find that a two state Markov chain suffices to reproduce the stylized facts mentioned. Furthermore the economic interpretation as riskier and less risky times is clear in this set-up. Our findings in the calibration example in Subsection~\ref{sub:calibration} support this view.

So far we have assumed that $X_n^i \sim F_i \ \ \forall \ n$ and $i \in\{1,2\}$, where $F_i$ is a general distribution function. In~\cite{rogers} the class of Generalized Hyperbolic (GHYP) distributions is chosen. 
There are several alternative parametrizations for the GHYP family, each with different parameter ranges. The following parametrization is the most widely used in literature.

\begin{Definition}
The univariate density of a GHYP $\left(\alpha,\beta,\lambda,\delta,\mu\right)$ is given by
\begin{align}
f_{X}(x)&=\frac{(\alpha ^2-\beta^2)^{\lambda/2}}{(2\pi)\alpha^{\lambda -\frac{1}{2}}\delta^\lambda K_{\lambda} (\delta \sqrt{\alpha ^2-\beta^2}) } \times \frac{K_{\lambda-\frac{1}{2}} (\alpha \sqrt{\delta ^2+(\mathbf{x}-\mu)^2})e^{\beta(x-\mu)}}{\sqrt{\delta ^2+(\mathbf{x}-\mu)^2}^{\frac{1}{2}-\lambda}},
\end{align}
\noindent where $K_\lambda$ is a modified Bessel function of the second kind.  
\end{Definition}

The generalized hyperbolic distribution is a flexible class  as it covers a wide range of well-known distributions as limiting or subclasses.
 One can obtain the normal distribution as a limiting case for $\delta \rightarrow \infty$ and $\delta/\alpha \rightarrow \sigma^2$.
For $\lambda=1$ these are the hyperbolic distributions, whereas for $\lambda=-1/2$ we get the normal inverse Gaussian distribution.
Another important limiting case is the Student-t distribution, which is a limit for $\lambda<0$ and $\alpha=\beta=\mu=0$. Furthermore, using the GHYP distribution we can model the observed heavy tails. In particular
\begin{align*}
\textrm{GHYP} \sim \left|x\right|^{\lambda-1}e^{(\pm\alpha+\beta)x}, \  \textrm{for} \ x\rightarrow \pm \infty.
\end{align*}

Recall the moment generating function in the chosen parametrization.
\begin{Definition}
The moment generating function (MGF) of a generalized hyperbolic distribution is
\begin{align*}
MGF(t)=e^{t\mu }\left( \frac{\alpha ^2-\beta ^2}{\alpha^2-(\beta+t)^2} \right)^{\lambda /2} \frac{K_{\lambda} (\delta \sqrt{\alpha ^2-(\beta+t)^2})}{K_{\lambda} (\delta \sqrt{\alpha ^2-\beta^2})}
\end{align*}
\noindent for $|\beta+t|<\alpha$.
\end{Definition}
In Subsection~\ref{Subsec:VaR} we will apply Fourier techniques to calculate the VaR and the ES.
It can be shown that the MGF is real analytic, therefore MGF is a holomorphic function for complex $z$ with $|z|<\alpha-\beta$. So the characteristic function can be obtained through $\phi(t)=\textrm{MGF}(it)$.

In the plot in Figure~\ref{ghyp_ghyp} we illustrate the application of GHYP in the case of log returns of the DAX index. In the histogram a fitted generalized hyperbolic distribution is superimposed together with a fitted normal distribution. The superior fit of GHYP in the center as well as in the tails is plain to see. 

\begin{figure}[H]
\begin{center}
\includegraphics[width=\textwidth]{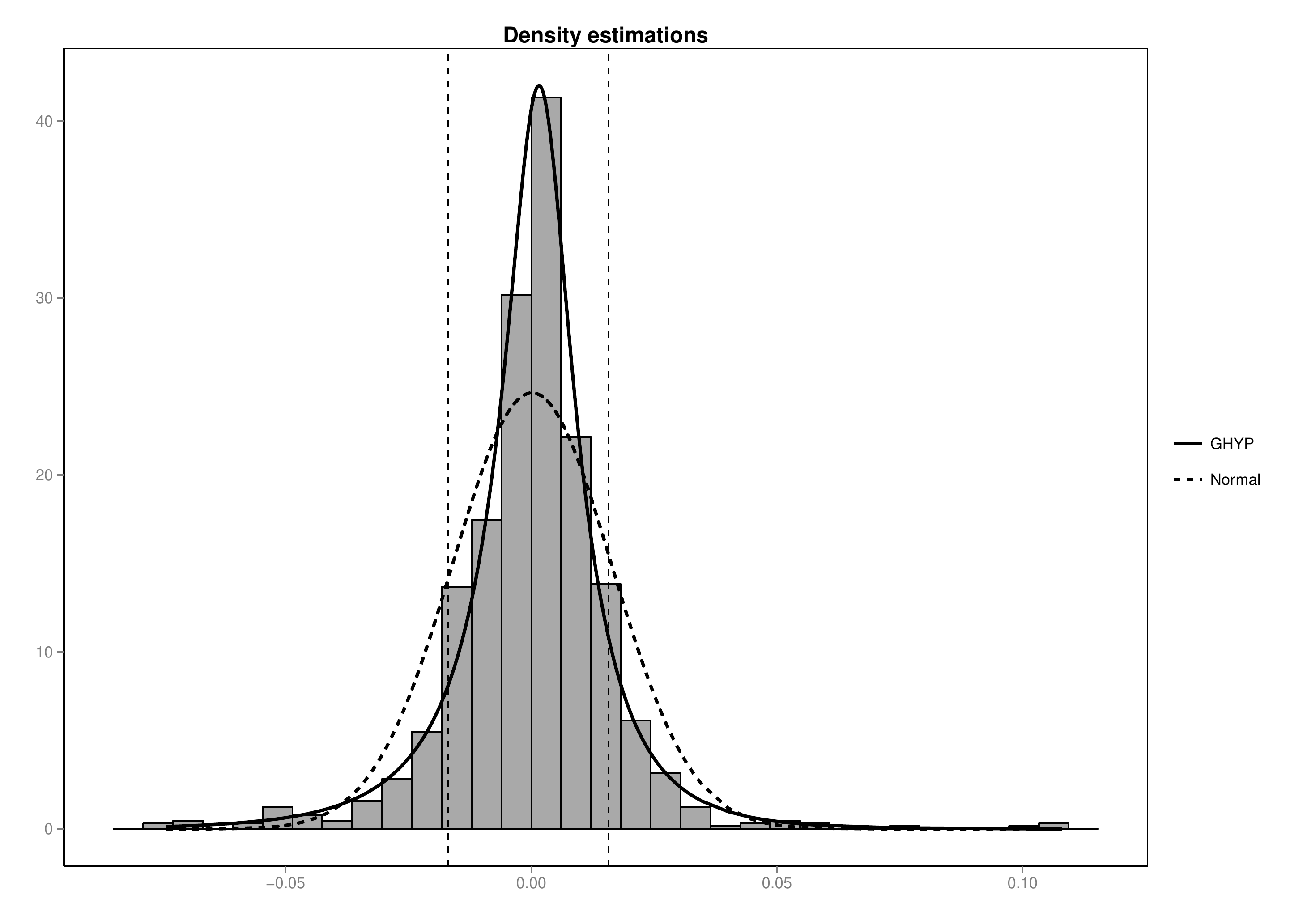}
\caption{Histogram of daily DAX log returns}
  \label{ghyp_ghyp}
\end{center}
\end{figure}

\subsection{Calibration example} \label{sub:calibration}

Given the fact that the model captures all the stylized facts mentioned, it is remarkable that it still can be calibrated using maximum likelihood. Additionally to the likelihood function a term that penalizes a deviation of the theoretical autocovariance function of absolute returns given by~\eqref{eq.acf} from the observed values is incorporated (for details we refer to~\cite{rogers}). We have implemented the model and its calibration in R~\cite{RManual} using the functions for the GHYP distribution provided in~\cite{ghyp}.

We calibrated the model to four indices. Note that as in~\cite{rogers} the Markov chain is one and the same for all indices. This will help us to derive an economic interpretation of the states. Except for the commodities index we focused on European indices in order to avoid closing time differences (see for example~\cite{war}). We considered the following indices:

\begin{description}
\item[Stocks]  For stocks the DAX index, the German stock index was chosen. The DAX index comprises of 30 major German companies and is one of the most important stock indices in Europe.
\item[Bonds] As a representation for bonds as less risky investment opportunity the REX index, the German Rentenindex, was selected. The REX measures the performance of $30$ German government bonds. Note that the REX is not investable but it still serves as a good proxy for the bond market.

\item[Commodities] The Dow Jones UBS Commodity Index (DJUBS) was considered. The Index is a futures based index which includes the following commodities markets: energy, precious metals, agricultural goods, industrial metals, and live stock.
\item[Volatility] The VDAX the German volatility index was also included as it tries to capture implied volatility of the DAX in a forward looking way. The VIX, the analogous to the VDAX  is known as a `fear' index. Although the  returns of a volatility index are of a quite different nature than stock index returns (compare estimated densities in of DAX returns resp. VDAX in Figure~\ref{fitdensity}) we included the VDAX in order to get more predictive power for the risk measures.
\end{description}

The time series consists of 1609 daily returns from 18 November 2005 to 27 June 2012 \footnote{The data was obtained via yahoo finance.}. This data set was divided into two parts one from 18 November 2005 to 30 December 2009 as calibration period for the model and from 04 January 2011 to 27 June 2012 as out-of-sample period used for backtesting.
Figure~\ref{timeseries} shows the movement of the four indices for the whole time period from 18 November 2005 to 27 June 2012. We can clearly see the spike in the VDAX beginning in late 2008 and the corresponding decline in the DAX and the DJUBS.
\noindent In the last year the Fukushima crisis is visible as a slight downward movement and of course the European sovereign-debt crisis with the stock market crash beginning in August 2011. 

\begin{figure}[H]
\begin{center}
\includegraphics[width=\textwidth]{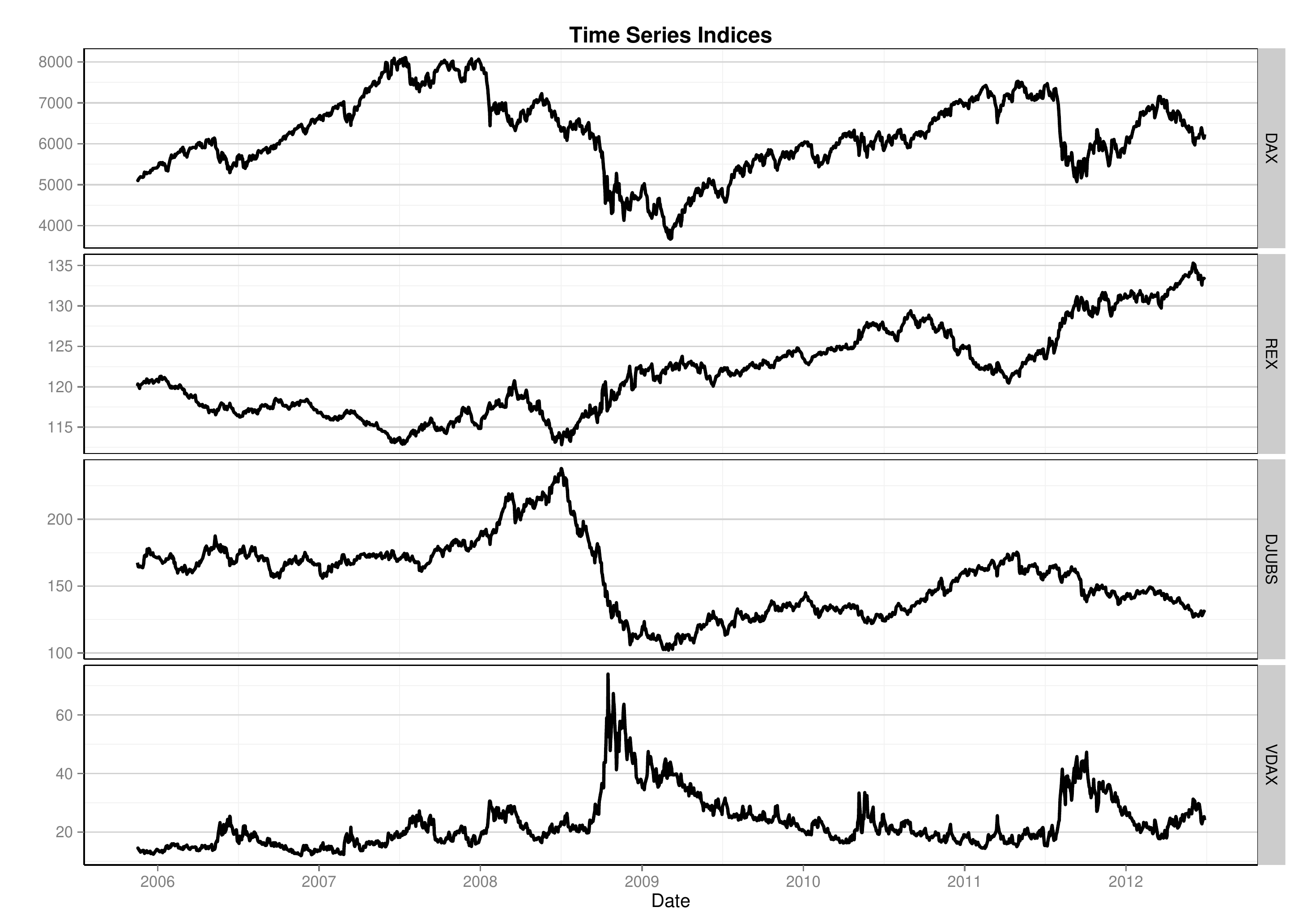}
\caption{Market performance 2005 - 2012}
  \label{timeseries}
\end{center}
\end{figure}

\begin{Remark}
To estimate the parameters in this model with two states, one needs a time series long enough as there are 38 parameters to estimate, namely ten parameters for a single asset plus two transition probabilities. As we require the same means in both states there are four parameters less to estimate in total. 
\end{Remark}

The next figures show the results of the maximum likelihood estimation. In Figure~\ref{acfestdax} the observed autocovariance structure versus the autocovariance  in the estimated model is plotted. We see that we get a fairly good fit. 
Figure~\ref{fitdensity} shows the estimated density functions for the considered indices. The densities are quite symmetric (except for the VDAX), as expected. Furthermore all fitted distributions exhibit fat tails and the distribution in one state is clearly riskier (more leptokurtic) than in the other state. Therefore we can conclude that the model has an economic meaning in the sense that one state represents an optimistic and the other state a pessimistic market. 
Another property is that if we are in the optimistic state for the stock index we are in the pessimistic state for bonds and vice versa. Which is again our natural intuition, that in optimistic upward moving market we tend to invest in stocks.

The optimistic state for the bond index is the same as the optimistic state for the commodity index which is surprising. If we think of simple correlation the commodity index should go together with the stock index. Maybe this fact is due to the broadness of the commodity index.

\begin{figure}[H]
\begin{center}
\includegraphics[width=\textwidth]{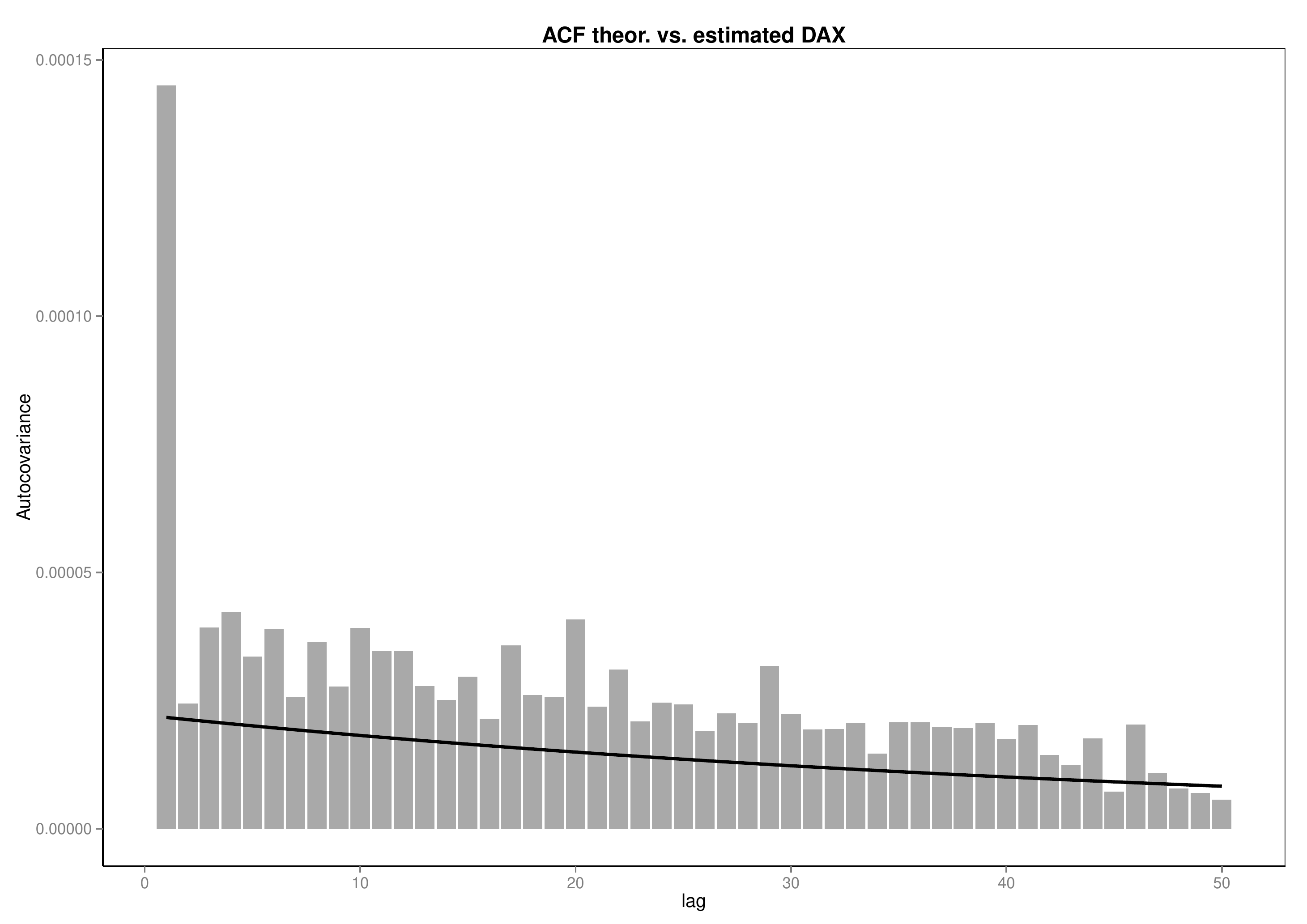}
\caption{ACF theoretical vs. observed DAX index}
  \label{acfestdax}
\end{center}
\end{figure}

\begin{figure}[H]
\begin{center}
\includegraphics[width=\textwidth]{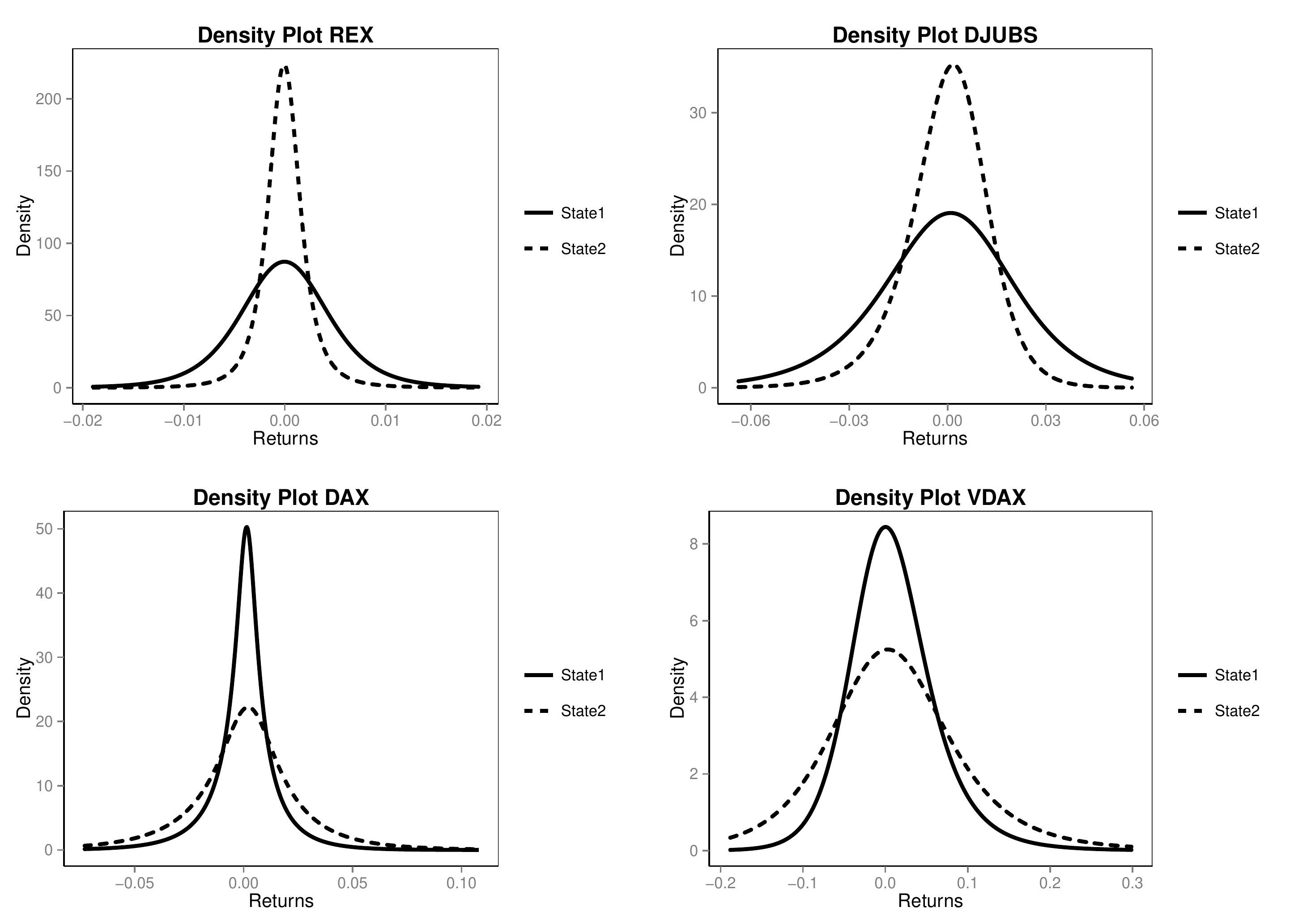}
\caption{Density function for both states for the four indices}
  \label{fitdensity}
\end{center}
\end{figure}

Having calibrated the model to the data we turn to applications. Rogers and Zhang present a number of applications~\cite{rogers} whereas we focus on an application to risk measures.

\section{Risk measures}
\label{sec:riskmeasures}

The first object that we want to address (see also~\cite{rogers}) is the posterior probability of being in a certain state. This means that having observed a certain return it tells us how our view of the current state changes.

\begin{Definition}
The posterior probability $p_n = (p_n^1,p_n^2)^T$  at time $n$ can be calculated by

\begin{align}\label{eq:post}
p_n=\frac{p_{n-1}PF(r_n,\theta_1,\theta_2)}{p_{n-1}PF(r_n,\theta_1,\theta_2)\mathbf{1}},
\end{align}
where\begin{align*}
F(r;\theta_1,\theta_2)=diag\left(\prod_{j=1}^mf(r^j;\theta_1^j),\prod_{j=1}^mf(r^j;\theta_2^j)\right),
\end{align*}
$m$ is the number of markets, $f$ the GHYP density and $\theta_{i}^{j}$ with $i \in \{1,2\}$ and $j \in 1,\ldots,m$ denote the parameters of the given market in the respective state.

\end{Definition}
The next plot shows the posterior probability of being in the optimistic state for the DAX index together with the performance of the index.

\begin{figure}[H]
\begin{center}
\includegraphics[width=\textwidth]{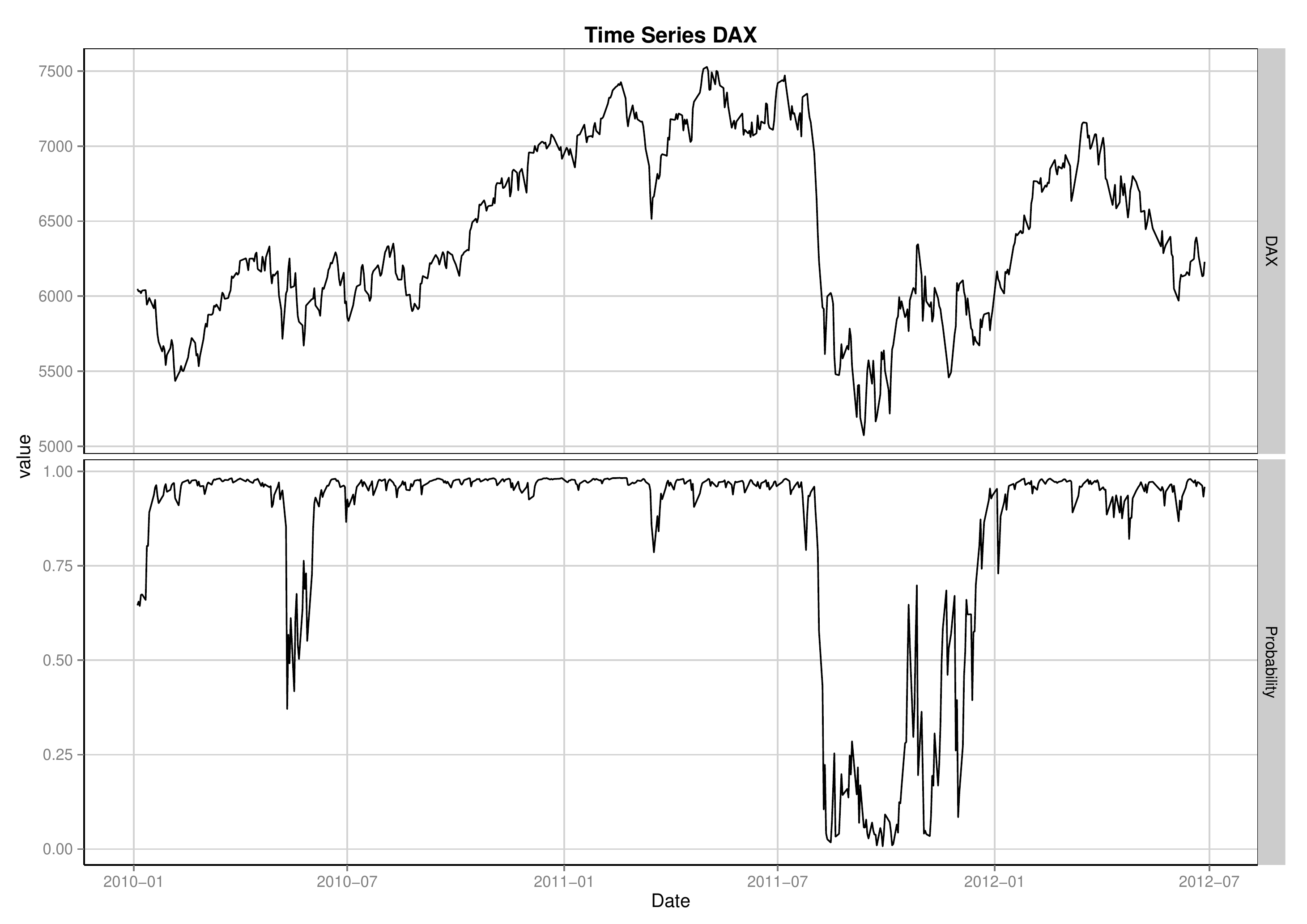}
\caption{Posterior probability - DAX index}
  \label{postprob}
\end{center}
\end{figure}
 
We see that the posterior probability reacts relatively quickly to losses in the index. In the downturn beginning in August 2011 the model captures this tense market state and swings into the pessimistic state. The posterior probability takes values close to zero or close to one most of the time telling us quite clearly in which state we are.
Observing this object could be of great value in risk management. Recall that the calibration period ended in $2009$ and that thus the data shown is purely out-of-sample. Despite this fact the posterior probability shows a satisfying pattern even in $2011$ which is already more than one year after the calibration period.

Additionally, the posterior probability measure is an input needed for the calculation of risk measures as we will see in the following subsection.

\subsection{Value-at-Risk}\label{Subsec:VaR}

In order to derive the value-at-risk (VaR) by Fourier methods we cite the characteristic function from~\cite{rogers}.

\begin{Definition}
The characteristic function of the N-period return distribution can be calculated by
\begin{align} \label{eq:chf}
\phi_N(t)=p_n\left(P\begin{pmatrix}\phi^1(t) &0 \\0 &\phi^2(t) \end{pmatrix}\right)^N\mathbf{1}
\end{align}
where $p_n$ is the posterior distribution at time $n$, given in~\eqref{eq:post}, $P$ the transition matrix and $\phi^i$, $i \in \{1,2\}$, are the univariate one-period characteristic functions of the asset return for the two states.
\end{Definition}

Note that by this definition the market returns of all assets that are modelled together influence the posterior view in $p_n$ and thereby the distribution of the specific asset considered. For example in the model that we calibrated, an increase in implied volatility as measured by VDAX will change this view and therefore change the posterior marginal distribution of the DAX.

For the purpose of backtesting the VaR below, we will stick to the case $N=1$. But the general form~\eqref{eq:chf} is needed for holding periods of more than one day. 

\begin{Definition}\label{defi:var}
The value-at-risk is defined by
\begin{align*}
VaR_\alpha=\inf\left\{x \in \rr: F(x)\geq \alpha\right\},
\end{align*}

where $\alpha$ is the confidence level and $F$ the cumulative distribution function of the asset return. If $F$ is continuous and strictly increasing the VaR is the unique solution of $F(VaR_\alpha)=\alpha$.
\end{Definition}

The general approach to the calculation of risk measures that we apply to this model uses Fourier techniques from option pricing theory
(see e.g. Lewis' method~\cite[Section~11.1.3]{cont2004financial} or~\cite{Lewis:simple}). To this aim we  extend the definition of the characteristic function $\phi$ to a function of a complex argument. Due to~\cite[Theorem 7.1.1]{Lukacs1970:characteristic} this complex Fourier transform is well defined on a strip of regularity in the complex plane, denoted by $S_X$.  
Let $f$ be the density function and $\phi$ the corresponding characteristic function.
Then by an application of the Plancherel (sometimes called Plancherel-Parseval) theorem for a function $g(x)$ it holds that
\begin{align}\label{eq:FourierApproach}
\int_{-\infty}^\infty f(x) g(x) dx = \frac{1}{2 \pi} \int_{-\infty+i \delta}^{\infty+i \delta} \phi(u) \hat{g}(u) du,
\end{align}
where $\hat{g}$ is the generalized Fourier transform of $g$ given by
\[
	\hat{g}(u) = \int_{-\infty+i \delta}^{\infty+i \delta} e^{-i u z} g(z) dz,
\]
and $\delta = \Im(z)$ for $z \in S_X$ is chosen such that the integral on the right hand side converges.

The final aim is to represented the risk measure of interest such that it can be efficiently calculated by the Fast Fourier Transform (FFT). For details of an implementation of FFT see e.g.~\cite{fft}.

\begin{Proposition}
Let $\phi$ be the characteristic function of the distribution function $F$ and $\delta>0$, then $F$ can be computed by the following formula
\begin{align}\label{eq:vaRF}
F(y)=\frac{e^{\delta y}}{2 \pi}  \int_{-\infty}^{\infty} e^{- i v y} \frac{\phi(v + i \delta)}{\delta- iv} dv.
\end{align}
\end{Proposition}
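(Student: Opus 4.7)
The strategy is to apply the generalized Plancherel identity~\eqref{eq:FourierApproach} with the test function $g(x)=\mathbf{1}_{(-\infty,y]}(x)$, for which the left-hand side reduces to
\[
\int_{-\infty}^{\infty} f(x)\,\mathbf{1}_{(-\infty,y]}(x)\,dx \;=\; F(y).
\]
Although the indicator is not integrable on $\rr$, its generalized Fourier transform at a complex argument $v+i\delta$ with $\delta>0$ is well defined, because the factor $e^{-i(v+i\delta)z}=e^{-ivz}e^{\delta z}$ provides an exponential damping that kills the integrand as $z\to-\infty$.

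The first step is to compute this generalized transform explicitly:
\[
\hat g(v+i\delta)\;=\;\int_{-\infty}^{y} e^{-i(v+i\delta)z}\,dz\;=\;\int_{-\infty}^{y} e^{(\delta-iv)z}\,dz\;=\;\frac{e^{(\delta-iv)y}}{\delta-iv},
\]
the boundary term at $-\infty$ vanishing thanks to $\delta>0$. Substituting this expression into~\eqref{eq:FourierApproach}, parameterizing the contour by $u=v+i\delta$, and factoring $e^{\delta y}$ out of the integral yields exactly the claimed representation.

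The hypotheses I would then check are (i) that $\delta$ is admissible, i.e. the horizontal line $\Im u=\delta$ lies inside the strip of regularity $S_X$ on which $\phi$ is holomorphic, and (ii) that the resulting integrand is absolutely integrable along that line. For the GHYP family used in the paper, condition (i) translates into the explicit bound $0<\delta<\alpha-\beta$, which is precisely the admissible range recorded after the definition of the MGF; condition (ii) follows from the rapid decay of the GHYP characteristic function at infinity together with the factor $1/|\delta-iv|$.

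The main obstacle is the rigorous justification of~\eqref{eq:FourierApproach} for the non-integrable choice $g=\mathbf{1}_{(-\infty,y]}$. A clean route is to tilt the density by $e^{\delta x}$ and apply ordinary Plancherel to $e^{\delta x}f(x)$ against truncated indicators $\mathbf{1}_{[-M,\,y]}(x)\,e^{-\delta x}$, then pass to $M\to\infty$ by dominated convergence on the frequency side; alternatively, one can invoke directly the extension of Plancherel to generalized Fourier transforms as used in~\cite{Lewis:simple} and underpinned by~\cite[Theorem 7.1.1]{Lukacs1970:characteristic}. Once this analytic passage is in place, the formula is an immediate consequence of the explicit computation of $\hat g$ above.
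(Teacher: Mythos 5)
Your proposal is correct and follows essentially the same route as the paper: both apply the generalized Plancherel identity~\eqref{eq:FourierApproach} with $g=\mathbf{1}_{(-\infty,y]}$, compute its generalized Fourier transform along $\Im u=\delta>0$ (your $\frac{e^{(\delta-iv)y}}{\delta-iv}$ is exactly the paper's $\frac{ie^{-iuy}}{u}$ after the substitution $u=v+i\delta$), and shift the contour to obtain~\eqref{eq:vaRF}. Your additional remarks on the admissible range $0<\delta<\alpha-\beta$ and on justifying the identity for the non-integrable indicator go beyond what the paper records, but they do not change the argument.
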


\begin{proof}
We can apply the general approach and find
\begin{align*}
\hat{g}(u) &= \int_{-\infty}^\infty e^{-i u x} \mathds{1}_{\{x \le y\}} dx \\
	       &= \frac{e^{-i u x}}{-iu} \Big|_{x = -\infty}^y = \frac{i e^{-i u y}}{u},
\end{align*}
for $\Im(u) = \delta >0$ such that the above integral converges.
Applying~\eqref{eq:FourierApproach} and a variable change we get
\begin{align*}
	F(y) &=  \frac{i}{2 \pi} \int_{-\infty+i \delta}^{\infty+i \delta}  e^{-i u y} \frac{\phi(u)}{u} du \overset{|v = u - i \delta|}{=} \\
	     &= \frac{e^{\delta y}}{2 \pi}  \int_{-\infty}^{\infty} e^{- i v y} \frac{\phi(v + i \delta)}{\delta- iv} dv.
\end{align*}

\end{proof}

The structure of~\eqref{eq:vaRF} allows us to compute the distribution function efficiently by FFT. Finally a simple root search delivers the value-at-risk. 

We apply this approach to the DAX index which is calibrated on data from November $2005$ to December $2009$ as described above and analyze the in-sample fit first. Figure~\ref{var} shows the index returns together with the $VaR_{95\%}$ estimate resulting from~\eqref{eq:vaRF}. The dashed horizontal lines are drawn at the $VaR_{95\%}$ in the two states respectively. Changes in the VaR estimate in the area between the dashed lines in Figure~\ref{var} are due to changes in the posterior distribution~\eqref{eq:post}, which is plotted below, reflecting the combined development of the market indices that are used in its calculation.

We backtest the VaR model using the binomial test~\cite{kup}. Observing $29$ violations with a sample of $999$ days gives a p-value of $0.999$ and thus strongly supports the accuracy of the model. 

To test the null hypothesis of random occurrences of VaR-violations we use the runs test as implemented in the R package tseries~\cite{rpackage:tseries} and find a p-value of $0.44$ which allows us to accept this null hypothesis. This confirms that the posterior distribution is reasonable sensitive to changes in the riskiness of the asset return.

\begin{figure}[H]
\begin{center}
\includegraphics[width=\textwidth]{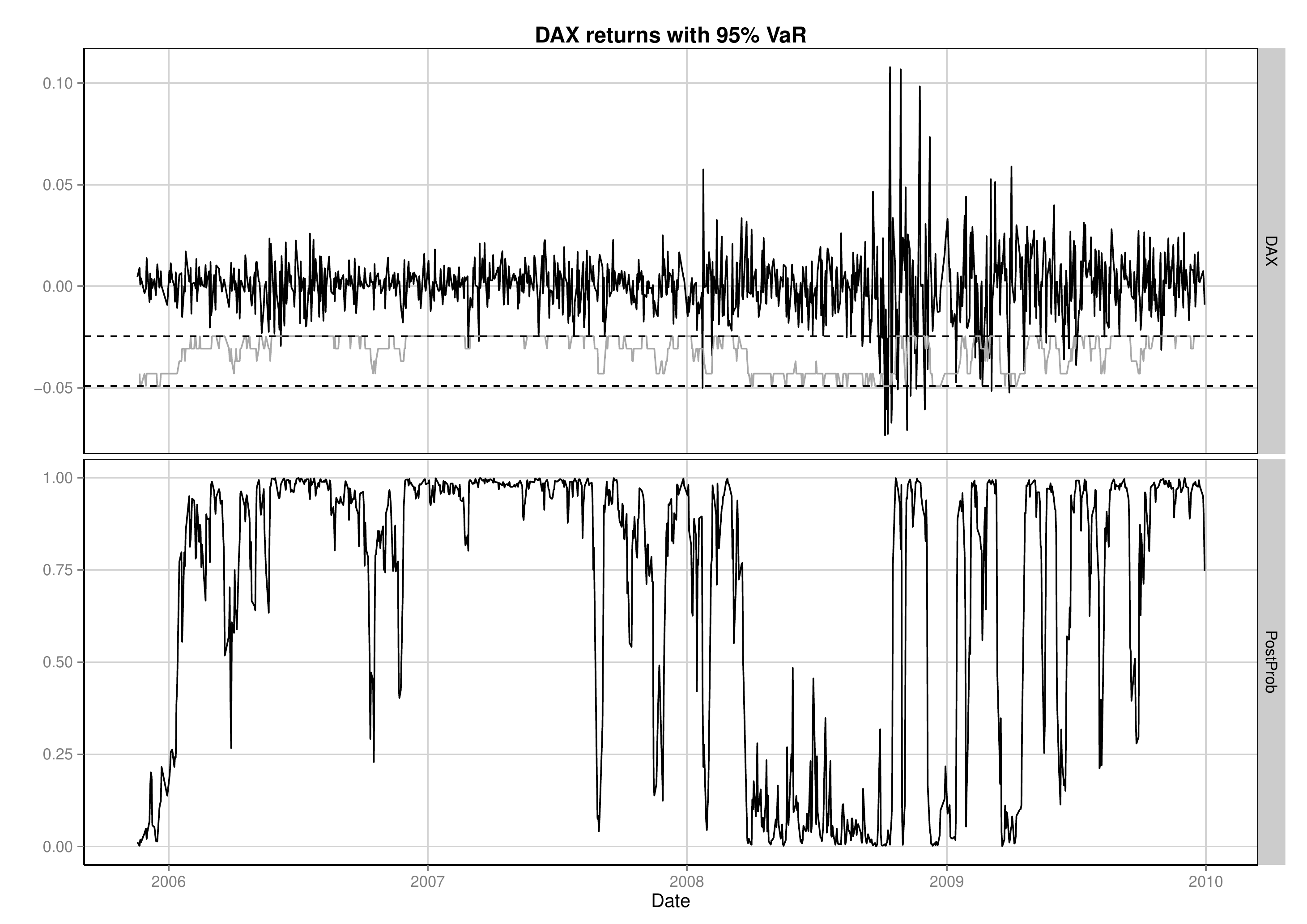}
\caption{$95\%$ value-at-risk for the DAX in-sample}
  \label{var}
\end{center}
\end{figure}

As a second risk measure we treat the calculation of expected shortfall in this model. Note that the distribution of asset returns is continuous. Thus we can state the following definition:

\begin{Definition}
The expected shortfall (ES) is defined by
\begin{align*}
ES_\alpha(X) = \ew{X|X \le VaR_\alpha(X)} ,
\end{align*}
where $\alpha$ is the confidence level and $VaR_\alpha$ is the value-at-risk to the level $\alpha$ as defined in Definition~\ref{defi:var}.
\end{Definition}

\begin{Proposition}
Let $\phi$ be the characteristic function of the distribution function $F$ and $\delta>0$, then the distribution function can be computed by the following formula
\begin{align}\label{eq:ESF}
ES_\alpha(X) = \frac{1}{1-\alpha} \frac{e^{\delta \operatorname{VaR}_{\alpha}}}{2 \pi}  \int_{-\infty}^{\infty} e^{- i v \operatorname{VaR}_{\alpha}} \frac{(\delta-i v)\operatorname{VaR}_{\alpha}-1}{(\delta-iv)^2} \phi(v + i \delta) dv.
\end{align}
\end{Proposition}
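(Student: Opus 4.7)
The plan is to follow the same Fourier/Plancherel strategy used for the distribution function in~\eqref{eq:vaRF}, but now applied to the function
\[
g(x) = x\,\mathds{1}_{\{x \le \operatorname{VaR}_{\alpha}\}},
\]
because $ES_\alpha$ can be written as
\[
ES_\alpha(X)=\frac{1}{\pp[X\le \operatorname{VaR}_\alpha]}\int_{-\infty}^{\infty} x\,\mathds{1}_{\{x\le \operatorname{VaR}_\alpha\}}f(x)\,dx,
\]
and the prefactor $1/(1-\alpha)$ in the target formula will absorb the probability $\pp[X\le \operatorname{VaR}_\alpha]$ (up to the confidence-level convention used in the proposition).

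First I would verify that $g$ is in the appropriate weighted $L^1$/$L^2$ class so that the Plancherel identity~\eqref{eq:FourierApproach} applies, and then compute the generalized Fourier transform $\hat g$. For $u=v+i\delta$ with $\delta>0$ the factor $e^{-iux}=e^{\delta x}e^{-ivx}$ decays at $-\infty$, so an integration by parts gives
\[
\hat g(u)=\int_{-\infty}^{\operatorname{VaR}_\alpha} x\,e^{-iux}\,dx
        =\left[-\frac{x\,e^{-iux}}{iu}+\frac{e^{-iux}}{u^{2}}\right]_{x=-\infty}^{x=\operatorname{VaR}_\alpha}
        =\frac{e^{-iu\operatorname{VaR}_\alpha}(1+iu\operatorname{VaR}_\alpha)}{u^{2}},
\]
where the lower limit vanishes because $xe^{\delta x},e^{\delta x}\to 0$ as $x\to-\infty$. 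This is the key computational step and also the main technical obstacle, since one has to track the signs and the two boundary terms carefully so as to end up with the quadratic denominator that appears in the proposition.

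Next I would substitute $\hat g$ into~\eqref{eq:FourierApproach}, perform the variable change $v=u-i\delta$, and use the elementary identity $(v+i\delta)^{2}=-(\delta-iv)^{2}$ together with $1+i(v+i\delta)\operatorname{VaR}_\alpha=1-(\delta-iv)\operatorname{VaR}_\alpha$ to rewrite the integrand in terms of $(\delta-iv)$. This yields
\[
\int_{-\infty}^{\infty} x\,\mathds{1}_{\{x\le \operatorname{VaR}_\alpha\}}f(x)\,dx
=\frac{e^{\delta\operatorname{VaR}_\alpha}}{2\pi}\int_{-\infty}^{\infty}e^{-iv\operatorname{VaR}_\alpha}\,\frac{(\delta-iv)\operatorname{VaR}_\alpha-1}{(\delta-iv)^{2}}\,\phi(v+i\delta)\,dv.
\]

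Finally, dividing by $\pp[X\le \operatorname{VaR}_\alpha]$, which under the paper's convention is identified with $1-\alpha$, produces exactly~\eqref{eq:ESF}. Verifying the convergence of the final integral (which requires $v+i\delta$ to lie in the strip of regularity $S_X$ of $\phi$, cf.~\cite[Thm. 7.1.1]{Lukacs1970:characteristic} and the behaviour of $\phi$ on the imaginary axis for the GHYP law) then completes the proof.
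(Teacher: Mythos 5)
Your proposal is correct and follows essentially the same route as the paper: the same choice $g(x)=x\,\mathds{1}_{\{x\le y\}}$, the same integration by parts for $\hat g$ (your antiderivative $\frac{e^{-iuy}(1+iuy)}{u^2}$ equals the paper's $\frac{e^{-iuy}(-iuy-1)}{(-iu)^2}$), the same shift $v=u-i\delta$, and the same final normalization by the tail probability under the paper's confidence-level convention. No substantive difference.
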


\begin{proof}
First we calculate 
\[
	\int_{-\infty}^y x f(x) dx,
\]
by the general approach~\eqref{eq:FourierApproach} above and then plug in the value-at-risk for $y$ and apply the FFT.
Applying integration by parts we get the transform
\begin{align*}
\hat{g}(u) &= \int_{-\infty}^\infty x e^{-i u x} \mathds{1}_{\{x \le y\}} dx \\
	       &= x \frac{e^{-i u x}}{-iu} \Big|_{x = -\infty}^y - \int_{-\infty}^y \frac{e^{-i u x}}{-iu} dx\\
	       &= y \frac{e^{-i u y}}{-i u} - \frac{e^{-i u y}}{(-i u)^2} = \frac{e^{-i u y}}{(-i u)^2} \left(-i u y -1\right),
\end{align*}
for $\Im(u) = \delta >0$ such that the above integral converges.
Applying~\eqref{eq:FourierApproach} and a variable change we get
\begin{align*}
	F(y) &=  \frac{1}{2 \pi} \int_{-\infty+i \delta}^{\infty+i \delta}  \frac{e^{-i u y}}{(-i u)^2} \left(-i u y -1\right) \phi(u) du \overset{|v = u - i \delta|}{=} \\
	     &= \frac{e^{\delta y}}{2 \pi}  \int_{-\infty}^{\infty} e^{- i v y} \frac{(\delta-i v)y-1}{(\delta-iv)^2} \phi(v + i \delta) dv.
\end{align*}
The result follows by plugging in the VaR for $y$ and using the definition of expected shortfall.
\end{proof}
The form of~\eqref{eq:ESF} allows for an application of the FFT. Reading off the value at the value-at-risk from the transformed sequence gives the expected shortfall.

Figure~\ref{ESoos} shows the index returns together with the $ES_{95\%}$ estimate resulting from~\eqref{eq:ESF} in the out-of-sample period. The dashed horizontal lines are drawn at the $ES_{95\%}$ in the two states respectively. Again we see the estimations of $ES_{95\%}$ wander up and down between these two extremes as the posterior probability changes. The $ES_{95\%}$ is breached only once in the out-of-sample period.

Concerning $VaR_{95\%}$ in the out-of-sample period we observe $10$ breaches during $610$ days which gives a p-value of $0.99$ in the binomial test. A runs test for the mixing of the breaches gives a p-value $0.6772$. These test results allow us to accept the model although the small number of breaches could indicate that the model calibrated in the highly volatile years around $2008$ is slightly too conservative for the out-of-sample period. In an industrial application the model should be recalibrated on a regular basis, e.g. once a month. Note that due to the high number of parameters the model is calibrated on $3$ years of data. Thus more frequent calibration will not change the model significantly as the influence of the data added is rather small.
\newline

In Figure~\ref{fig:VaRComaprison} we compare the value-at-risk derived from the regime switching model to the value-at-risk from a much simpler approach in order to see how much we gain by introducing the two states. As a comparative model we simply assume a GHYP distribution of the returns. Doing so we loose some of the features mentioned above but we still can model heavy tails and skewness. While we have used a history of approximately $3$ years for calibrating the regime switching model we need much less data for the \emph{simple} model and use this advantage. Namely, in each time step we look back for $250$ business days and fit the distribution and calculate the VaR. This VaR is then compared to the return of the following day. Figure~\ref{fig:VaRComaprison} shows that this simple model performs much worse. It reacts too slowly to changing market conditions and accumulates $36$ breaches in the testing period whereas the VaR of the regime switching model is breached only $10$ times. 
This illustrates the benefit of the modelling approach.

\begin{figure}[H]
\begin{center}
\includegraphics[width=\textwidth]{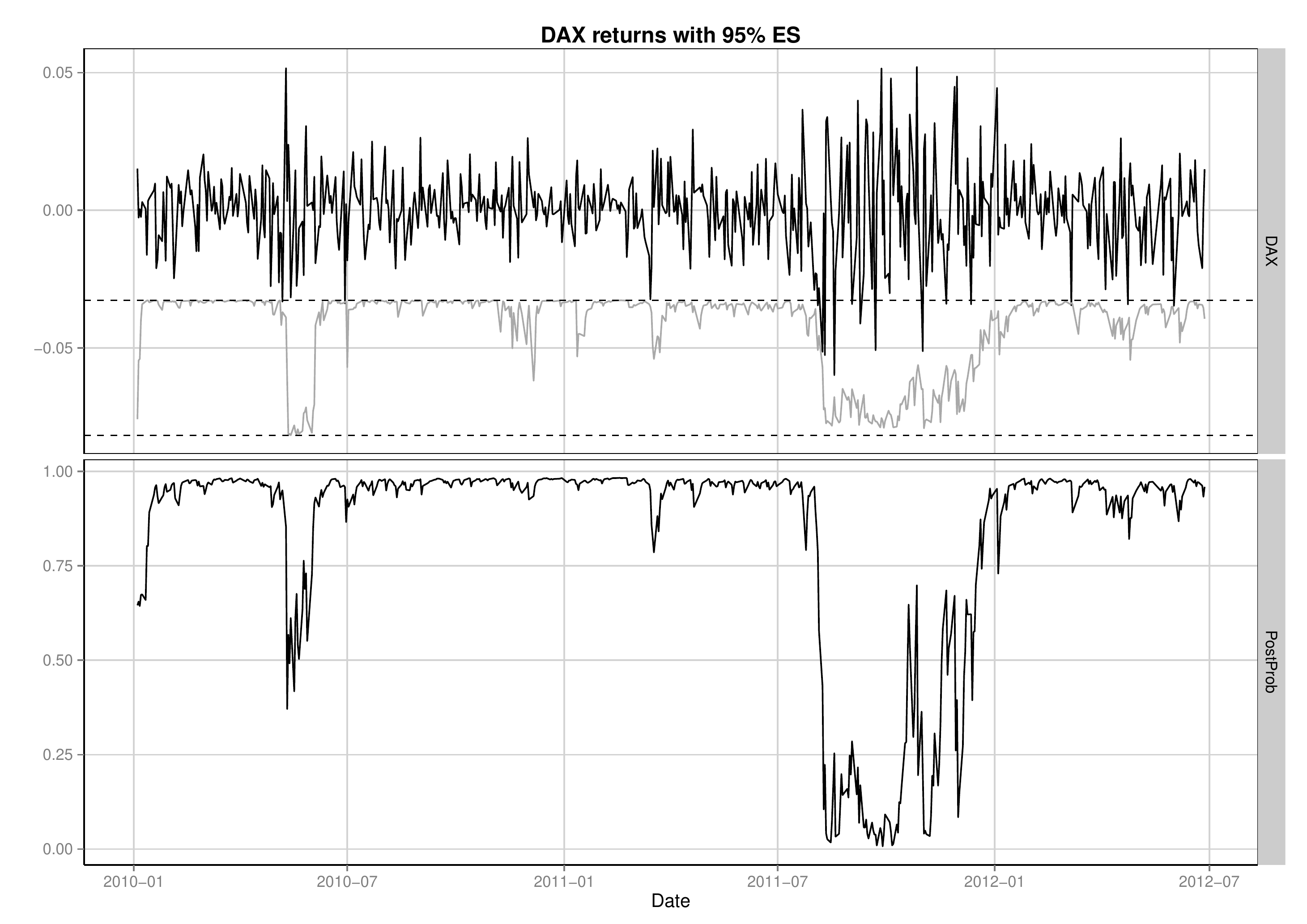}
\caption{$95\%$ expected shortfall with posterior probability}
  \label{ESoos}
\end{center}
\end{figure}

\begin{figure}[H]
\begin{center}
\includegraphics[width=\textwidth]{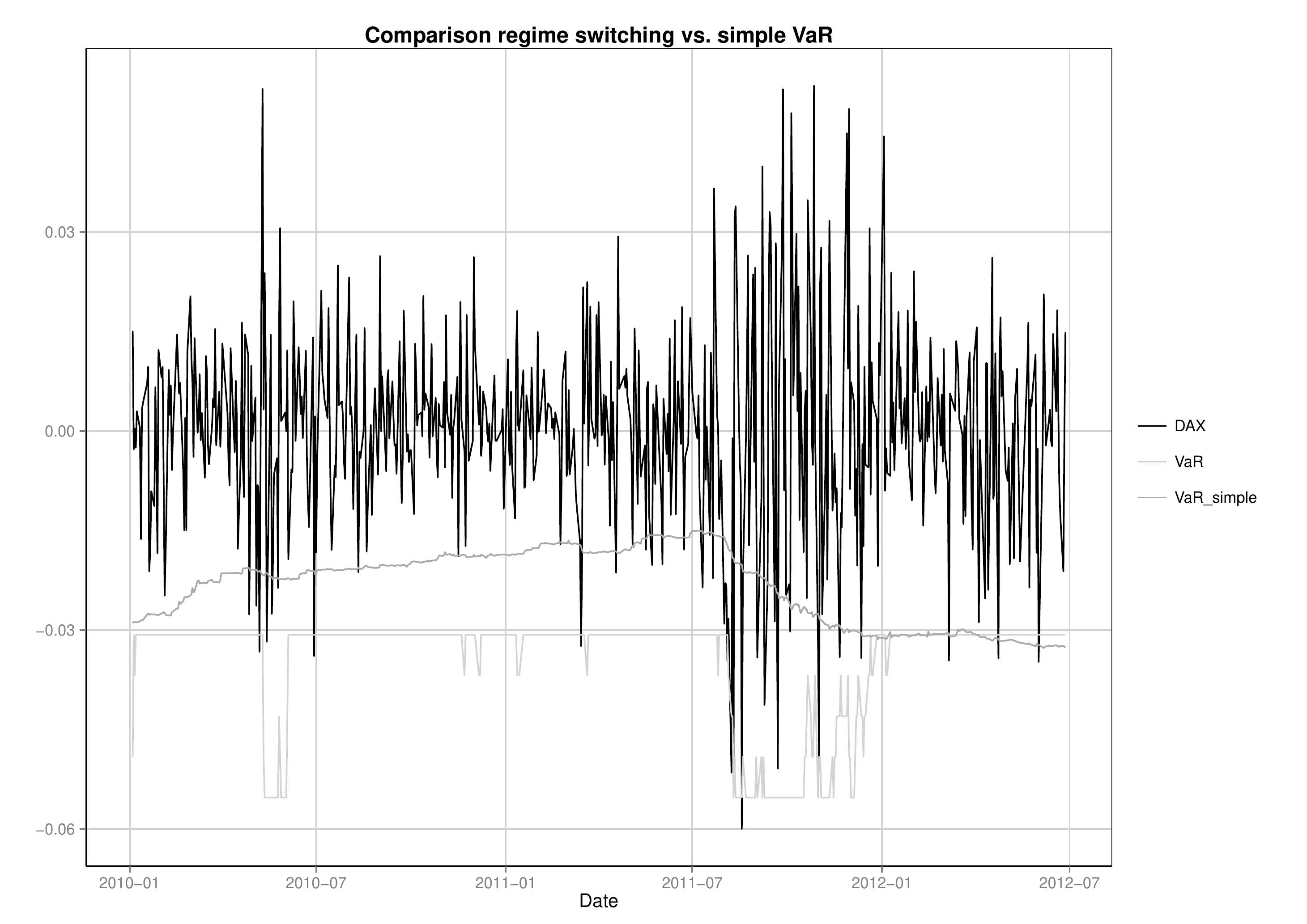}
\caption{$95\%$ VaR of the regime switching model compared to the \emph{simple} model}
  \label{fig:VaRComaprison}
\end{center}
\end{figure}

\section{Conclusion}

In this article the model introduced by Rogers and Zhang~\cite{rogers} is applied to calculation of the widely used risk measures value-at-risk and expected shortfall. In order to find a proper description of market conditions we calibrate the model to stocks, bonds, commodities and volatility. The volatility index is included in the calibration in order to increase the predictive power of the resulting model. Fourier methods from option pricing are applied to derive formulas that allow for an application of FFT to calculate the risk measures. 
Finally in-sample and out-of-sample analysis show the benefit of the method proposed.

Although the calibration boils down to a maximum-likelihood estimation a proper choice of the length of the calibration period turns out to be crucial which can be seen as a weak point of the method. 

\bibliographystyle{plain}
\bibliography{References}

\end{document}